\newtheorem{example}{Example}
\newtheorem{theorem}{Theorem}
\newcommand{\name}[0]{McSplit+LL\xspace}
\title{A Strengthened Branch and Bound Algorithm \\
for the Maximum Common (Connected) Subgraph Problem}
\author{
Anonymous authors\\
ID: 5104 \\
}
\author{
    Jianrong Zhou$^1$, %jrzhou@hust.edu.cn
    Kun He\textsuperscript{$\dag$}$^1$,  %brooklet60@hust.edu.cn
    Jiongzhi Zheng$^1$, %jzzheng@hust.edu.cn    
    Chu-Min Li$^2$,  %chu-min.li@u-picardie.fr
    Yanli Liu$^3$ %yanlil2008@163.com
    \affiliations
    $^1$School of Computer Science and Technology, Huazhong University of Science and Technology\\
    $^2$MIS, University of Picardie Jules Verne, France\\
    $^3$WuHan University of Science and Technology, China \\
    \emails
    \{jrzhou,jzzheng,brooklet60\}@hust.edu.cn \\
    \vspace{0.15em}
    \textsuperscript{$\dag$}Corresponding author.    
}
\author{
First Author$^1$
\and
Second Author$^2$\and
Third Author$^{2,3}$\And
Fourth Author$^4$
\affiliations
$^1$First Affiliation\\
$^2$Second Affiliation\\
$^3$Third Affiliation\\
$^4$Fourth Affiliation
\emails
\{first, second\}@example.com,
third@other.example.com,
fourth@example.com
}
\begin{document}

\maketitle

\begin{abstract}
We propose a new and strengthened Branch-and-Bound (BnB) algorithm for the maximum common (connected) induced subgraph problem based on two new operators, Long-Short Memory (LSM) and Leaf vertex Union Match (LUM). Given two graphs for which we search for the maximum common (connected) induced subgraph,
the first operator of LSM maintains a score for the branching node using the short-term reward of each vertex of the first graph and the long-term reward of each vertex pair of the two graphs. In this way, the BnB process learns to reduce the search tree size significantly and improve the algorithm performance. 
The second operator of LUM further improves the performance by simultaneously matching the leaf vertices connected to the current matched vertices, and allows the algorithm to match multiple vertex pairs without affecting the solution optimality.
We incorporate the two operators into the state-of-the-art BnB algorithm McSplit, and denote the resulting algorithm as \name. Experiments show that \name outperforms 
McSplit+RL, a more recent variant of McSplit using reinforcement learning that is superior than McSplit.  
\end{abstract}

\section{Introduction}
%A graph is a typical model abstracted from many real-world applications, it consists of a set of vertices and a set of edges, the vertices indicate different objects and the edges indicate the relation between the vertices. Many common subgraph problems are proposed for measuring the similarity of two graphs. 

Graphs are important data structures that can be used to model many real-world problems. One of the most basic graph problems is to measure the similarity of graphs.
Given two graphs $G_0$ and $G_1$, the Maximum Common induced Subgraph (MCS) aims to find an induced subgraph $G_0'$ in $G_0$ and an induced subgraph $G_1'$ in $G_1$ such that $G_0'$ and $G_1'$ are isomorphic and the number of vertices of $G_0'$ and $G_1'$ is maximized. The vertices of $G_0'$ are said to be matched with the vertices of $G_1'$. MCS has a variant called the Maximum Common Connected induced Subgraph (MCCS), which further requires the induced subgraph to be connected. 
%A closely related problem called the induced Subgraph Isomorphism (SI) consists in determining whether a pattern graph appears in a target graph. 
These problems are widely applied in various domains, such as biochemistry~\cite{giugno2013,bonnici2013subgraph}, cheminformatics~\cite{raymond2002maximum,englert2015efficient,duesbury2017maximum}, compilers~\cite{blindell2015modeling}, model checking~\cite{sevegnani2015bigraphs}, molecular science~\cite{ehrlich2011maximum}, pattern recognition~\cite{solnon2015complexity}, malware detection~\cite{bruschi2006detecting,park2013deriving}, video and image analysis~\cite{shearer2001video,jiang2003image,hati2016image}, etc.

As an NP-hard problem, MCS is computationally very challenging. Many approaches have been proposed for solving MCS and its related problems, which could be mainly divided into two categories: exact algorithms and inexact algorithms. 
An exact algorithm guarantees to obtain an optimal solution but runs in exponential time in the worst case. It aims to efficiently enumerate the whole search space or reduce the search space without affecting the solution's optimality. The approaches for exact algorithms include Linear Programming (LP)~\cite{bahiense2012maximum}, Branch-and-Bound (BnB)~\cite{raymond2002maximum,MccreeshPT17,liu2020learning}, etc. 
An inexact algorithm aims to find a near-optimal solution within reasonable computational resources (e.g., time and memory). The approaches for inexact algorithms include heuristics~\cite{bonnici2013subgraph,englert2015efficient,duesbury2017maximum} and metaheuristics~\cite{choi2012efficient}.
Recently, machine learning~\cite{shearer2001video} and deep learning ~\cite{zanfir2018deep,bai2021glsearch} techniques are also adopted for solving MCS.

The BnB algorithms have exhibited high performance for MCS problems. Given a vertex $p$ in $G_0$, McSplit~\cite{MccreeshPT17}, one of the state-of-the-art algorithms for MCS, proposes a partition method to efficiently filter the set of candidate vertices of $G_1$ that can be matched with $q$, and uses a novel compact candidate set representation to dramatically reduce the memory and computational requirements during the search.
McSplit+RL~\cite{liu2020learning} further improves McSplit using a new branching method based on reinforcement learning.

%It rewards the branching vertices after matching a vertex pair and chooses the vertex with the maximum score as the branching node, then breaks ties using vertex degree on the backtracking. The learned rewards for branching find the best solution as early as possible to significantly reduce the search tree size. 
%is a learning based Reinfocement Learning (RL) instead of the vertex degree for BnB algorithm, and propose a new variant algorithm called McSplit+RL~\cite{liu2020learning}, which is the  state-of-the-art exact algorithm for MCS.McSplit+RL aims to reduce the size of the search tree and reach the pruning condition as early as possible. When the BnB algorithm matches a vertex pair, it receives the reduced value of the bound immediately and uses the reduced value to score each vertex or vertex pair. Then McSplit+RL uses the score of vertices in the branching heuristic where higher score vertex has higher priority for the vertex selection.  

In this work, we propose two new operators to speed up the BnB process, namely Long-Short Memory (LSM) and Leaf vertex Union Match (LUM). In a BnB process for MCS, the branching consists in first selecting a vertex $p$ from the first graph $G_0$, and then matching each candidate vertex $q$ of $G_1$ in turn.  
LSM maintains a score for each $p$ using the short-term reward, and the long-term reward of each branching pair $\langle p, q \rangle$. In this way, the search tree size could be reduced. 
LUM improves the efficiency in a different way.
When a pair of vertices are matched for branching, LUM also matches the leaf vertices connected to the current matched vertices, allowing to reduce the search space while keeping the solution's optimality.

We implement our two operators on top of McSplit, and design a new and strengthened algorithm called \name. 
We extensively evaluate \name on 25,552 instances from diverse applications as McSplit+RL does. 
Results show that the proposed \name clearly outperforms McSplit and McSplit+RL, which are already highly efficient. We also carry out an empirical analysis to give insight on why and how our two proposed operators are effective, suggesting that both of them can reduce the search tree size, so as to improve the efficiency.
Besides, the second operator of LUM is general and could be useful for other graph search problems.

\section{Problem Definition}%Preliminaries}
Consider a simple (undirected or directed , unlabelled)
%\HK{Add in future work that we can easily extend the method to handle labelled graph} 
graph $G = (V, E)$, where $V$ and $E$ represent the vertex set and the edge set, respectively. 
Two vertices $u,v$ are called adjacent if $(u,v) \in E$, and the degree of a vertex $u$ is the number of its adjacent vertices.
An induced subgraph $G' = (V', E')$ of $G$ consists of a vertex subset $V' \subseteq V$ and the edges in $E' = \{(u, v) \mid \forall u, v \in V', (u, v) \in E\}$. 

Given two graphs $G_0$ and $G_1$, 
%where $G_0$ is called the pattern graph and $G_1$ is called the target graph. 
if there is an induced subgraph $G_0'$ of $G_0$ and there exists a bijection $\phi : V_0'$ $\rightarrow V_1'$, $\Phi(G_0')$ ($\Phi(G)$ donates a graph that mapping all vertices in $G$ by $\phi$) is also an induced subgraph of $G_1$, then we call $\langle G_0', \Phi(G_0') \rangle$ is the common induced subgraph of $\langle G_0, G_1 \rangle$.
%If there are a pair of two graphs $\langle G_0, G_1 \rangle$, a subgraph $G'$, and it exists a vertex-to-vertex bijection $\phi : V_0 \rightarrow V_1$, the $G'$ is an induced graph of $G_0$ and $\Phi(G')$ ($\Phi$ is a bijection donated by mapping all the vertices in graph by $\phi$) also is an induced graph of $G_1$, we donate $\langle G', \Phi(G') \rangle$ as the common subgraph of the $\langle G_0, G_1 \rangle$. 
Let $V' = \{v_1, v_2, ... , v_{|V'|}\}$, the common induced subgraph is also represented as a list of vertex pairs $\{ \langle v_1, \phi(v_1) \rangle, \langle v_2, \phi(v_2) \rangle, ... , \langle v_{|V'|}, \phi(v_{|V'|}) \rangle \}$

The Maximum Common induced Subgraph (MCS) problem aims to find the common induced subgraph such that the number of its vertices is maximized. 
In other words, the two induced subgraphs are isomorphic with the maximum vertex cardinality.
There is a variant problem of MCS, the Maximum Common Connected induced Subgraph (MCCS) problem, which requires the induced subgraph is connected.
%while the latter determines whether $G_0$ (the pattern graph) is isomorphic to an induced subgraph of $G_1$ (the target graph).

%\section{The Branch and Bound for MCS and Baseline Algorithm} 
\section{Branch and Bound for MCS}%Related Work}
This section presents two of the best-performing BnB algorithms for MCS and MCCS,  which are McSplit~\cite{MccreeshPT17} and McSplit+RL~\cite{liu2020learning}. 
To simplify the algorithm description, we initially assume that the graphs are undirected, as the method is easy to be adapted to handle various extensions of the problem~\cite{MccreeshPT17}. 

At each search node, the BnB algorithm first estimates an upper bound of the best solution that can be found in the current subtree. If the upper bound is not larger than the size of the current best solution, the algorithm prunes this branch and backtracks, because any better solution cannot be found %to update the current solution 
under this search node. Otherwise, it selects a new vertex pair to match, updates the current solution and then runs recursively. 
There are three %critical points
key issues for implementing the BnB algorithm: 1) estimate the upper bound, 2) design the branching strategy, 3) maintain candidate vertices of the two graphs. 
To address these issues, McSplit and McSplit+RL proposed label class and a learning based scoring mechanism for the branching nodes, respectively.

%In this section, we introduce the two best-performing BnB algorithms for MCS, McSplit~\cite{MccreeshPT17} and McSplit+RL~\cite{liu2020learning}. McSplit is the base version, which uses an elegant way to solve the above critical points. McSplit+RL proposes a learning based branching heuristic to improve the performance of McSplit, and is the state-of-the-art.

\begin{algorithm}[tb]
\caption{MCS($\mathcal{D}$, $\alpha$, $\beta$, $\gamma$, $S_{cur}$, $S_{best}$)}
\label{alg1}
\textbf{Input}: A bidomain list $\mathcal{D}$; three heuristic functions $\alpha$, $\beta$ and $\gamma$ for selecting the bidomain, the first matched vertex, and the second matched vertex, respectively; the current maintained solution $S_{cur}$; the best solution found so far $S_{best}$.\\
\textbf{Output}: The optimal solution $S_{best}$.

\begin{algorithmic}[1] %[1] enables line numbers
\STATE $UB$ $\leftarrow$ $\text{overEstimate}(\mathcal{D})$ $+$ $|S_{cur}|$
\IF {$UB \leq |S_{best}|$}
\STATE \textbf{return} $S_{best}$
\ENDIF
\STATE $\langle V_l, V_r \rangle$ $\leftarrow$ a bidomain by the heuristic $\alpha(\mathcal{D})$
\STATE $p$ $\leftarrow$ a vertex obtained by the heuristic $\beta(V_l)$
\STATE $m$ $\leftarrow$ $|V_r|$
\FOR {i from 1 to $m$}
\STATE $q$ $\leftarrow$ a vertex obtained by the heuristic $\gamma(V_r)$
\STATE $V_r$ $\leftarrow$ $V_r \setminus \{q\}$
\STATE $S_{cur}'$ $\leftarrow$ $S_{cur} \cup \{ \langle p, q \rangle \}$
\IF {$|S_{cur}'| > |S_{best}|$}
\STATE $S_{best}$ $\leftarrow$ $S_{cur}'$
\ENDIF
\STATE $\mathcal{D}'$ $\leftarrow$ a new bidomain list obtained by dividing $\mathcal{D}$
\STATE $S_{best}$ $\leftarrow$ MCS($\mathcal{D}'$, $\alpha$, $\beta$, $\gamma$, $S_{cur}'$, $S_{best}$)
\ENDFOR
\STATE $\mathcal{D}'$ $\leftarrow$ a new bidomain list obtained by removing $p$ from $\mathcal{D}$ 
\STATE $S_{best}$ $\leftarrow$ MCS($\mathcal{D}'$, $\alpha$, $\beta$, $\gamma$, $S_{cur}$, $S_{best}$)
\STATE \textbf{return} $S_{best}$
\end{algorithmic}
\end{algorithm}

\begin{figure}[t]
\centering
\includegraphics[width=0.9\columnwidth]{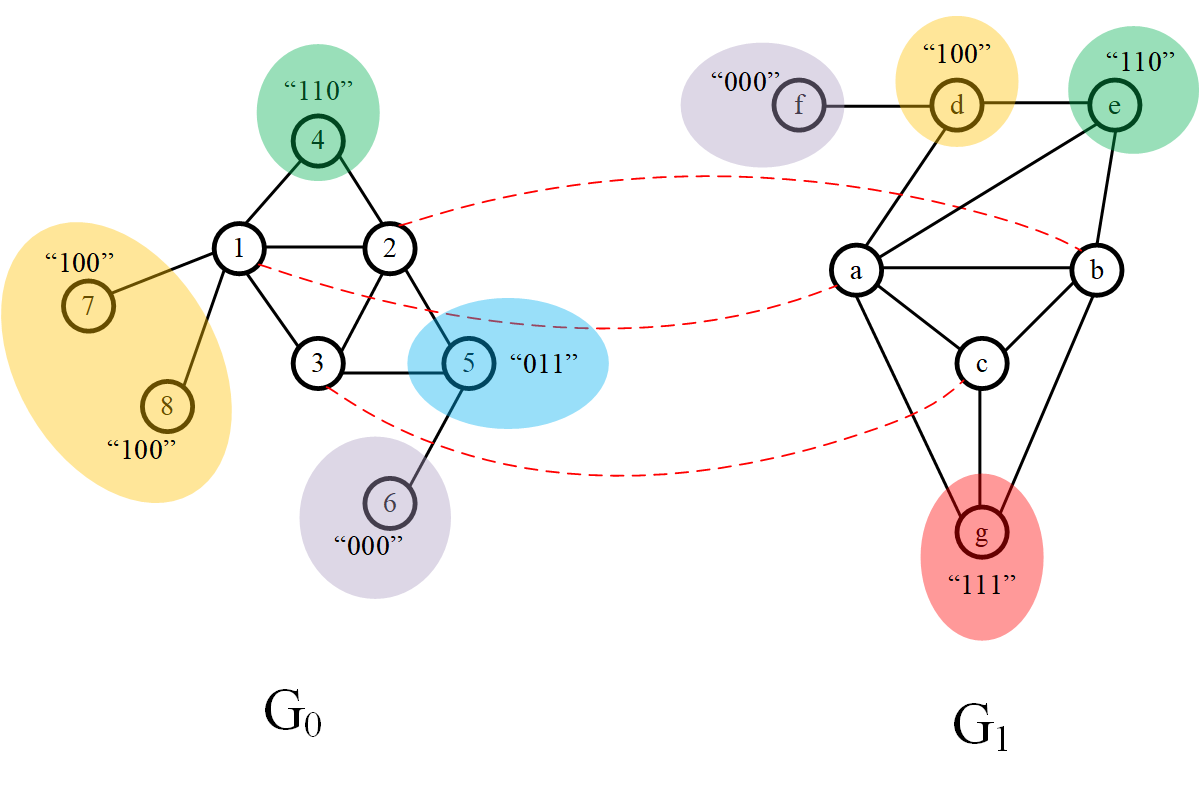}
\vspace{-0.5em}
\caption{ 
An example to illustrate the concepts of bidomain and label class for MCS. There are three matched vertex pairs $\langle 1, a \rangle$, $\langle 2, b \rangle$ and $\langle 3, c \rangle$. According to the definitions of bidomain and label class, the remaining vertices are divided into three valid bidomains $D$(``000'') $= \langle \{6\}, \{f\} \rangle$, $D$(``100'') $= \langle \{7, 8\}, \{d\} \rangle$ and $D$(``110'') $= \langle \{4\}, \{e\} \rangle$, where ``0'' indicates that the vertex is not adjacent to the corresponding matched vertex and ``1'' otherwise.
%indicates the vertex is adjacent to the corresponding matched vertex. 
Note that $D$(``011'') and $D$(``111'') are invalid bidomains, as they cannot provide any matched vertex pair, and will be removed in the bidomain dividing process. 
}
\vspace{-0.5em}
\label{fig1}
\end{figure}

%Before discuss the algorithm, we first introduce two concepts of McSplit, called bidomain structure and label class.
%\subsection{McSplit} \label{Sec3.1}
%\subsection{The General BnB Framework}
\subsection{The BnB Framework for MCS Problems}
\label{Sec3.1}
%The general BnB framework for MCS problems is proposed in \cite{MccreeshPT17}. 
\cite{MccreeshPT17} proposes a BnB framework for MCS problems, that the current best-performing algorithms McSplit and McSplit+RL both adopt.
For two input graphs $G_0$ and $G_1$, the bidomain structure consists of two vertex sets $\langle V_l, V_r \rangle$, where $V_l$ and $V_r$ are subsets of $V(G_0)$ and $V(G_1)$, respectively. McSplit assigns a label to each of the bidomains. 
%the vertices in the two graphs that%
The label indicates that each vertex in the bidomain has the same connectivity to the matched vertices, and is represented by a ``01''-string. %The vertices of two graphs that have the same class label are partitioned into a bidomain,
Therefore the two graphs can be represented as a list of bidomains $\mathcal{D}$ during the search and any vertex pair selected from each bidomain is legal to match.
%different vertex sets 
%in each bidomain.
Whenever a vertex pair $\langle p, q \rangle$ matches, each bidomain $\langle V_l, V_r \rangle$ will be divided into two new bidomains $\langle V_l^0, V_r^0 \rangle$ and $\langle V_l^1, V_r^1 \rangle$, where the superscript ``0'' and ``1'' indicate the connectivity of the matched vertex pair $\langle p, q \rangle$. 
%The following example shows how the bidomain structure and label class work in McSplit. 
See Figure~\ref{fig1} for an illustrative example. 

\iffalse
\begin{example}
Figure~\ref{fig1} shows two undirected and unlabelling graphs $G_0$ and $G_1$. Initially, all vertices of the two graphs form a bidomain with an empty string label $D$(``'') $= \langle \{1, 2, 3, 4, 5, 6, 7, 8\}, \{a, b, c, d, e, f, g\} \rangle$. The first matched vertex pair is $\langle 1, a \rangle$, and the bidomain $D$(``'') divides into $D$(``0'') $= \langle \{5, 6\}, \{f\} \rangle$ and $D$(``1'') $= \langle \{2, 3, 4, 7, 8\}, \{b, c, d, e, g\} \rangle$ where the vertices in the bidomain labelled by ``0'' are not adjacent to the vertices of matched pair and otherwise labelled by ``1''. 
%the vertices in the bidomain labelled by ``1'' are adjacent to one of the vertices of matched pair. 
Then the second matched vertex pair is $\langle 2, b \rangle$. Consequently, the bidomain $D$(``0'') divides into $D$(``00'') $= \langle \{6\}, \{f\} \rangle$ and $D$(``01'') $= \langle \{5\}, \emptyset \rangle$; the bidomain $D$(``1'') divides into $D$(``10'') $= \langle \{7, 8\}, \{d\}\rangle$ and $D$(``11'') $= \langle \{3, 4\}, \{c, e, g\} \rangle$. Note that the bidomain $D$(``01'') has an empty set, indicating that it can not provide any matched vertex pair, thus it is removed from the current bidomain list. After matching the third vertex pair $\langle 3, c \rangle$, the bidomains reaches a %situation
configuration as shown in Figure~\ref{fig1}.
\end{example}
\fi 

Obviously, a bidomain $\langle V_l, V_r \rangle$ can provide at most $\min (|V_l|, |V_r|)$ matched vertex pairs. Therefore, the algorithm estimates the upper bound of the bidomain list $\mathcal{D}$ by the following equation:
\begin{equation} \label{eq1}
    \text{overEstimate}(\mathcal{D}) = \sum_{\langle V_l, V_r \rangle \in \mathcal{D}} \min(|V_l|, |V_r|)
\end{equation}

Then, we introduce the flow of BnB algorithm based on the depth-first search for MCS, as depicted in Algorithm~\ref{alg1}. The call of MCS($\{ \langle V(G_0), V(G_1) \rangle \}$, $\alpha$, $\beta$, $\gamma$, $\emptyset$, $\emptyset$) returns a maximum common induced subgraph of $G_0$ and $G_1$ where $\alpha$, $\beta$ and $\gamma$ are the heuristics. 
At each search node, the algorithm calculates the upper bound assisted by Eq.~\ref{eq1} and uses the upper bound to determine whether it can find a better solution from this search node. If there exists no better solution from this node, the algorithm prunes and backtracks. 
Then it selects a bidomain $\langle V_l, V_r \rangle$ from the current bidomain list $\mathcal{D}$ by heuristic $\alpha$ and picks a vertex $p$ from $V_l$ by heuristic $\gamma$. The algorithm enumerates all the vertices in $V_r$ to match with the vertex $p$ where the order of enumeration is decided by heuristic $\gamma$. When a vertex pair $\langle p, q \rangle$ is matched, the algorithm appends the matched pair to the current solution, updates the best solution, and obtains a new bidomain list $\mathcal{D}'$ by dividing the current bidomain list $\mathcal{D}$ by $\langle p, q \rangle$. Afterwards, the algorithm runs recursively with the new bidomain list $\mathcal{D}'$. After enumerating all the possible matched vertex pairs of $p$, the rest of the configuration is that vertex $p$ does not appear in the matched vertex pair list, thus the algorithm removes vertex $p$ from the current bidomain list and runs recursively. 

\subsection{McSplit and McSplit+RL}
\label{Sec3.2}
There are three heuristics $\alpha$, $\beta$ and $\gamma$ that represent the strategy of selecting the bidomain, selecting the first matched vertex and the order of enumerating the second matched vertex, respectively. 
These heuristics are the essential components in the BnB framework, that lead to different BnB algorithms. %McSplit and McSplit+RL 
%and algorithm that decides the branching strategy of the search tree. The quality of the heuristics will directly affect the search efficiency and algorithm performance. 

McSplit~\cite{MccreeshPT17} implements the three heuristics as follows:
\begin{itemize}
\item $\alpha$: McSplit defines the value of $\max(|V_l|, |V_r|)$ as the size of the bidomain. It selects a bidomain with the smallest size from the bidomain list and uses the largest vertex degree in $V_l$ to break ties. 
%By applying this strategy, the algorithm will branch less at each search node and partitions the bidomains frequently. It certainly reduces the size of the search tree. Note that it can also use value $|V_l| \cdot |V_r|$ as the size of bidomain which means there are $|V_l| \cdot |V_r|$ ways to select a vertex pair from a bidomain, but empirical results show that using $\max(|V_l|, |V_r|)$ is better. 
\item $\beta$: Picks the largest degree vertex from $V_l$ as the first matched vertex. 
\item $\gamma$: Enumerates the second matched vertex in $V_r$ in decreasing order of vertex degree. %nonascending order.% from large to small.  
\end{itemize}

%In McSplit, the selecting vertex pair heuristics $\beta$ and $\gamma$ are naive and not adaptive. 
%\subsection{McSplit+RL} 
%In McSplit, the selecting vertex pair heuristics $\beta$ and $\gamma$ are not adaptive.
McSplit+RL~\cite{liu2020learning} introduces a learning based branching heuristic, 
%to reduce the size of the BnB search tree inspired by reinforcement learning. 
so as to choose a branching pair with the largest bound reduction and reach the pruning condition as early as possible. 
McSplit+RL regards the BnB algorithm as an agent and a vertex pair selection as an action. When a vertex pair $\langle p, q \rangle$ is matched, and the current bidomain list $\mathcal{D}$ is divided into $\mathcal{D}'$,
%the current bidomain list $D$ divides into a new bidomian list $D'$, 
McSplit+RL uses the estimated bound reduction of the bidomain list as the reward $r(p, q)$ of taking action $\langle p, q \rangle$. 
\begin{equation} \label{eq2}
    r(p, q) = \text{overEstimate}(\mathcal{D}) - \text{overEstimate}(\mathcal{D}').
\end{equation}
%where function $\text{overEstimate}(\cdot)$ is defined in Eq.~\ref{eq1}.

McSplit+RL maintains two score lists $S_0(\cdot)$ and $S_1(\cdot)$ for each vertex in $G_0$ and $G_1$, records the accumulated rewards of each vertex. Specifically, update the score lists as follows:
%both are initialized to 0. At each search node, when the algorithm takes an action $\langle p, q \rangle$, McSplit+RL computes the reward $r(p, q)$ immediately and updates the score lists as follows:
\begin{equation}
\begin{aligned}
    S_0(p) &\gets S_0(p) + r(p, q) \\
    S_1(q) &\gets S_1(q) + r(p, q)
\end{aligned}
\end{equation}

Then McSplit+RL replaces heuristics $\beta$ and $\gamma$ based on the two score lists as follows:
%evaluates the branching pair selection that would lead to the largest bound reduction.
%It replaces heuristics $\beta$ and $\gamma$  as follows:
\begin{itemize}
\item $\beta$: Picks a vertex $p$ with the largest score $S_0(p)$ from $V_l$ as the first matched vertex. 
\item $\gamma$: Enumerates the second matched vertex $q$ in $V_r$ 
%with the order is the score $S_1(q)$ from large to small.  
%using the decreasing order of $S_1(q)$ score. 
in decreasing order of the $S_1(q)$ score.
\end{itemize}

\section{The Proposed \name Algorithm} 
We first analyze the limits of the existing branching heuristic, and then we propose a new and strengthened branching heuristic called Long-Short Memory (LSM). Furthermore, we propose a Leaf vertex Union Match (LUM) method for the vertex-vertex mapping based algorithm. %and then we extend this method to the general graph matching. 
We implement both LSM and LUM on top of McSplit, and call the resulting new algorithm \name. 

\subsection{Further Discussion on McSplit(+RL)} 
In McSplit, the selecting vertex pair heuristics $\beta$ and $\gamma$ are straightforward and not adaptive, hence its branching strategy may not be the best choice to minimize the search tree size.

In McSplit+RL, the main idea for the branching is to use the bound reduction to evaluate each vertex.% or vertex pair. 
 Then it selects the vertex with the largest evaluation score to match, aiming to reduce the bound as much as possible and quickly reach the pruning condition. The score is accumulated during the entire BnB algorithm, i.e., it is the summation of all the historical evaluation values. 
However, with the number of recursion and backtracking increases, the situation of the current solution and unmatched vertices (the bidomains) in the two graphs have changed drastically. %, but the historical evaluation values take increasing proportion in the score. 
Accumulating the scores evenly causes a bias caused by a large proportion of historical evaluation values when the current configuration differs from the historical configuration.
%to evaluate the current configuration which is large different to the history configuration.
Hence, we need a mechanism that can reasonably eliminate the influence of the historical evaluations.

%\HK{Besides, McSplit+RL has tried a simple variant of further calculating the reward $r(p, q)$ of a match pair $\langle p, q \rangle$ but its performance is weaker than McSplit+RL. However, we feel the evaluation on the vertex pair is useful because ... The reason that it does not work well in McSplit+RL is due to the ... By ..., we ... }.

\subsection{Long-Short Memory Branching Heuristic} \label{Sec4.2}
To overcome the limits of existing BnB methods,
we apply the reinforcement learning method in McSplit+RL to evaluate the benefit of matching each vertex on the reduction of the search tree size, and further propose a mechanism to eliminate the historical evaluations by decaying part of the evaluation values when they reach a predetermined upper limit.

Our method maintains a score list $S_0(\cdot)$ as McSplit+RL does, and further maintains a score table $S_{t}(p, q)$ of each vertex pair $\langle p, q \rangle$ simultaneously. The initial value of the score table $S_{t}(p, q)$ is set to 0 for each vertex pair $\langle p, q \rangle$. Whenever action $\langle p, q \rangle$ is performed, $S_{t}(p, q)$ will be updated by the following formula:

\begin{equation} \label{eq4}
    S_{t}(p, q) = S_{t}(p, q) + r(p, q).
\end{equation}

%adding the reward $r(p, q)$ (Eq. \ref{eq2}). The $\beta$ heuristic in our method is the same as in McSplit+RL. While we change the $\gamma$ heuristic to: 

Our method replaces the $\gamma$ heuristic in McSplit+RL with:

\begin{itemize}
    \item $\gamma$: Enumerates the second matched vertex $q$ in $V_r$ in decreasing order of the $S_{t}(p, q)$ score.
\end{itemize}

Before introducing our %proposed 
decaying mechanism, we first provide insights on the scoring mechanism. 
Firstly, score $S_0(p)$ is accumulated by reward $r(p, q)$ where vertex $p$ and $q$ are in the same bidomain. The bound reduction is seriously influenced by matching to which vertex $q$, and the bidomains are changed frequently that means the vertices matched to $p$ are very dynamic. It will lead the evaluation score to be outdated quickly. %, the same as to $S_1(q)$. 
Thus, using the score $S_0(p)$ to evaluate the bound reduction of taking action $\langle p, q \rangle$ is very inaccurate.
Secondly, the evaluation score on vertex pair $S_{t}(p, q)$ records the reward $r(p, q)$ properly, which is more accurate than $S_0(p)$. So the historical rewards accumulated by $S_{t}(p, q)$ have significant reference value. So we propose a LSM strategy to make the score list $S_0(\cdot)$ focus on the short-term reward, and the score table $S_{t}(\cdot, \cdot)$ focus on the long-term reward.

%need a mechanism of branching that makes the score list $S_0(\cdot)$ and $S_1(\cdot)$ focus on the short-term reward and the pair score list $S_{t}(\cdot, \cdot)$ focus on the long-term reward.
%The score $S_0(p)$ is accumulated by $R(p, q)$ where vertex $p$ and $q$ are in a same bidomain, and algorithm uses 

%In order to handle the different properties of $S_0(\cdot)$ ($S_1(\cdot)$) and $S_{t}(\cdot, \cdot)$, we use 
In LSM, a short-term threshold value $T_{s}$ ($10^5$ by default) and a long-term threshold value $T_{l}$ ($10^9$ by default) are used to implement our mechanism. At each search node, when the algorithm is branching and the score $S_0(p)$ and $S_{t}(p, q)$ are updated, if score $S_0(p)$ is greater than $T_{s}$, then all the scores in the score list $S_0(\cdot)$ decay to a half.
%Note that the decay operations in $S_0(\cdot)$ and $S_1(\cdot)$ are independent.
And we regard the score table $S_{t}(\cdot, \cdot)$ as $|V(G_0)|$ score lists, i.e., $S_{t}(\cdot, \cdot) = S_{t}(v_1, \cdot), S_{t}(v_2, \cdot), ..., S_{t}(v_{|V(G_0)|}, \cdot)$. If score $S_{t}(p, q)$ is greater than $T_{l}$, then all the scores in the score list $S_{t}(p, \cdot)$ decay to a half. The decaying operation in each score list of the score table is independent.  %By applying such an operation, we can make the score list focus on the short-term reward by setting a small threshold value and focus on the long-term reward by setting a large threshold value. %We call our mechanism the short term memory (on $S_0(\cdot)$ and $S_1(\cdot)$) and long-short term memory (on $S_0(\cdot)$ and $S_{t}(\cdot, \cdot))$) for the branching heuristic. %Note that~\cite{liu2020learning} also proposed a variant of McSplit+RL that applies the 

\subsection{Leaf Vertex Union Match}
We first provide a definition of the leaf vertex of an undirected graph. 
%For convenience, we call the leaf node (vertex) leaf. 
A vertex is regarded as a leaf if it is adjacent to exactly one vertex in the graph and the leaves of vertex $u$ indicate the leaves adjacent to $u$. 
Then, we provide the main theory to support our LUM strategy. 

\begin{theorem}
In the BnB framework, when a vertex pair $\langle p, q \rangle$ is matched, we can match as many leaf pairs as possible from the unmatched leaves of vertex $p$ and $q$ without affecting the solution's optimality. 
\end{theorem}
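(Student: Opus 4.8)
The plan is to establish the statement by an exchange argument on optimal solutions. Fix the search node at which the pair $\langle p, q\rangle$ has just been matched, let $L_p$ be the currently unmatched leaves of $p$ in $G_0$ and $L_q$ the currently unmatched leaves of $q$ in $G_1$, and set $k = \min(|L_p|, |L_q|)$. Since forcing $k$ leaf pairs only restricts the search, the optimum attainable under the forced matches is trivially no larger than the free optimum, so the whole content of the statement is the reverse inequality. I would therefore take an arbitrary optimal common (connected) induced subgraph $S^{*}$ that extends the current matching and show it can be transformed, without changing its size, into an equally optimal solution containing $k$ leaf-to-leaf pairs.

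First I would record the one structural fact that drives everything: because a leaf $\ell \in L_p$ is adjacent only to $p$, any vertex $v$ to which $\ell$ is matched in $S^{*}$ must be adjacent to $q$ and to no other matched vertex of $G_1$, and symmetrically for the leaves of $q$. In particular every such $v$ is itself a leaf of the matched subgraph $G_1'$, its only neighbour inside $G_1'$ being $q$, and dually on the $G_0$ side. I would also note that the roles are fully interchangeable: all members of $L_p$ (resp.\ $L_q$) are structurally identical, so ``as many leaf pairs as possible'' means exactly $k$ pairs and any particular pairing is as good as another.

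Next I would let $j$ be the number of leaf-leaf pairs already present in $S^{*}$ and argue that whenever $j < k$ I can increase $j$ by one while preserving both the size and, for MCCS, the connectivity. Since $j<k\le|L_p|,|L_q|$ there exist a leaf $\ell \in L_p$ and a leaf $m \in L_q$ not part of any leaf-leaf pair; a short check shows they cannot both be unmatched in $S^{*}$, for then $\langle \ell, m\rangle$ could simply be appended, contradicting optimality. The remaining cases are: (i) exactly one of them is unmatched, where I rematch the matched one to the free leaf; and (ii) both are matched, say $\ell \leftrightarrow v$ and $u \leftrightarrow m$, where I perform the swap $\ell \leftrightarrow m$, $u \leftrightarrow v$. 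The heart of the proof is verifying that each rewiring yields a legal induced-subgraph isomorphism, which reduces to checking the adjacency profiles of $\ell, m, u, v$ against the matched pair $\langle p, q\rangle$ and against every other matched pair; the structural fact above makes all these checks succeed. For the connected variant I would observe that case (ii) leaves both vertex sets $V_0'$ and $V_1'$ unchanged, only $\phi$ being rewired, so connectivity is immediate, while case (i) deletes a leaf of the matched subgraph and attaches the new leaf to $q$ (or $p$), again preserving connectivity.

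I expect the main obstacle to be the bookkeeping of the swap in case (ii): one must confirm that $u$ and $v$, originally certified only through their partners $m$ and $\ell$, have adjacency profiles compatible with each other and with the entire rest of the solution once they are repaired into the pair $\langle u, v\rangle$. Once this single exchange step is shown to preserve feasibility and size, iterating it until $j = k$ yields an optimal solution realizing all $k$ forced leaf matches, which establishes the reverse inequality and completes the proof; the same argument applies verbatim to MCCS thanks to the connectivity observations above.
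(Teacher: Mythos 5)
Your proof is correct, but it takes a genuinely different route from the paper's. You argue entirely at the level of solutions: fix an optimal common subgraph extending the current matching, observe that any partner of a leaf of $p$ must be a pendant vertex of the matched subgraph attached to $q$ (and dually), and then repair the solution by rematching or swapping until it contains $k=\min(|L_p|,|L_q|)$ leaf-to-leaf pairs, checking at each exchange that adjacency profiles and (for MCCS) connectivity are preserved. The paper instead argues inside the McSplit machinery: all unmatched leaves of $p$ and $q$ carry the identical label string ``0...010...0'' (a single ``1'' in the position of the pair $\langle p,q\rangle$), hence they always sit together in one bidomain and are legal to match whenever $p$ and $q$ are matched; moreover, since no unmatched vertex is adjacent to a leaf, matching a leaf pair never splits any bidomain, so it places no constraint on the remainder of the search. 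Your exchange argument is more self-contained and rigorous --- it does not presuppose the label-class invariant, it makes the optimality claim precise as a two-sided inequality, and it handles the connected variant explicitly, which the paper's proof glosses over. The paper's argument, in return, is much shorter, directly certifies that LUM is ``free'' to implement (the bidomain list is unchanged except for deleting the two leaves), and sets up the generalization to directed and labelled graphs via leaf attributes that follows the theorem. The one point worth tightening in your write-up is the interchangeability remark: you use it to pass from ``some $k$ leaf-leaf pairs exist in an optimal solution'' to ``the specific pairing forced by the algorithm is optimal,'' and it deserves a one-line proof (any bijection between the chosen leaves of $p$ and of $q$ yields an isomorphic solution, since all such leaves have identical adjacency), but this is a minor gap, not an error.
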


\begin{proof}
Assume the current solution has $k$ matched vertex pairs and the $i$-th pair is $\langle p, q \rangle$. According to the label class definition in Section~\ref{Sec3.1}, the labels of the unmatched leaves of $p$ and $q$ are all the same as ``0...010...0'' which has exactly one ``1'' in the $i$-th position, because the leaves are only adjacent to the vertex $p$ and $q$ respectively. Thus, these leaves are always partitioned in the same bidomain. It guarantees that no matter what the matching configuration of other vertices is, the leaves of $p$ and leaves of $q$ are always legal to match as long as $p$ and $q$ match together. If arbitrary pair of leaves are matched, all the remaining unmatched vertices are not adjacent to both matched leaves. Matching any leaf pair does not divide a new bidomain from the current bidomain list. Therefore, such an operation does not affect the solution's optimality of the BnB algorithm for MCS. 
\end{proof}

\begin{figure}[t]
\centering
\includegraphics[width=0.9\columnwidth]{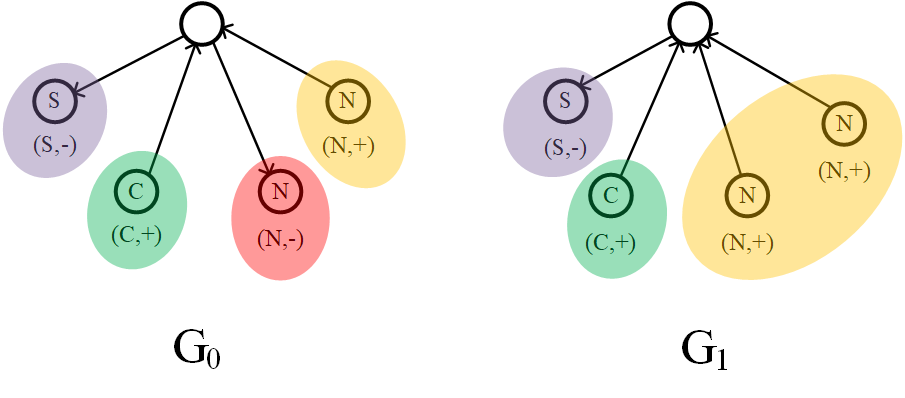}
\vspace{-0.5em}
\caption{ 
An example to illustrate the leaf attribute of LUM for a general situation. It shows two directed and labelling graphs $G_0$ and $G_1$ (only vertices are labelled, thus we omit the edge label in the leaf attribute), each leaf attribute is given below the leaf. The leaf attribute $(A, \pm)$ represents the vertex label and the direction of arc connected to the leaf respectively. Sign $+$ means the leaf is the head of the arc and sign $-$ means the leaf is the tail of the arc. There are at most three leaf pairs that can be provided in this example. 
}
\vspace{-0.5em}
\label{fig2}
\end{figure}

\vspace{-1em}
Our LUM strategy can be extended to general graph matching problems. Consider a directed and labelling graph, where both vertices and arcs are labelled. Let $L_v = \{a_1, a_2, a_3, ...\}$ and $L_e = \{b_1, b_2, b_3, ...\}$ represent the vertex label set and the arc label set, respectively. We give each leaf an attribute $(a_i, \pm b_j)$ where $a_i$ is the label of vertex, $b_j$ is the label of arc connected to the leaf and the sign represents the direction of arc ($+$ means leaf is head, $-$ means leaf is tail). We partition the leaves into different groups according to the attributes. Leaves in the corresponding group can be matched, therefore it can match as many leaf pairs as possible in each group. We also provide an example as shown in Figure~\ref{fig2}. Note that the leaf grouping is independent of the bidomain dividing.

\section{Experiments}%al Results}
We compare our proposed \name algorithm with McSplit+RL~\cite{liu2020learning}, which is the state-of-the-art BnB exact algorithm for MCS. We tested the algorithms on both the MCS and MCCS problems. The experimental results show that \name outperforms McSplit+RL for solving these problems. Further ablation studies show the effectiveness of our proposed methods, including LSM branching heuristic and LUM strategy.

\subsection{Experimental Setup}
Experiments were performed on a server with Intel® Xeon® E5-2650 v3 CPU and 256GBytes RAM. The algorithms were implemented in C++ and compiled using g++ 5.4.0. The cutoff time is set to 1800 seconds for each instance, which is the same as in~\cite{liu2020learning}. 

There are two parameters, $T_{s}$ and $T_{l}$ for the LSM branching heuristic. We sample some instances solved between 10 to 30 minutes and the parameter tuning domain is set to $10^k, k \in [2, 9]$. Finally, the default setting of two parameters $T_{s}$ and $T_{l}$ are $10^5$ and $10^9$, respectively.

\begin{figure}[t]
\centering
\includegraphics[width=0.98\columnwidth]{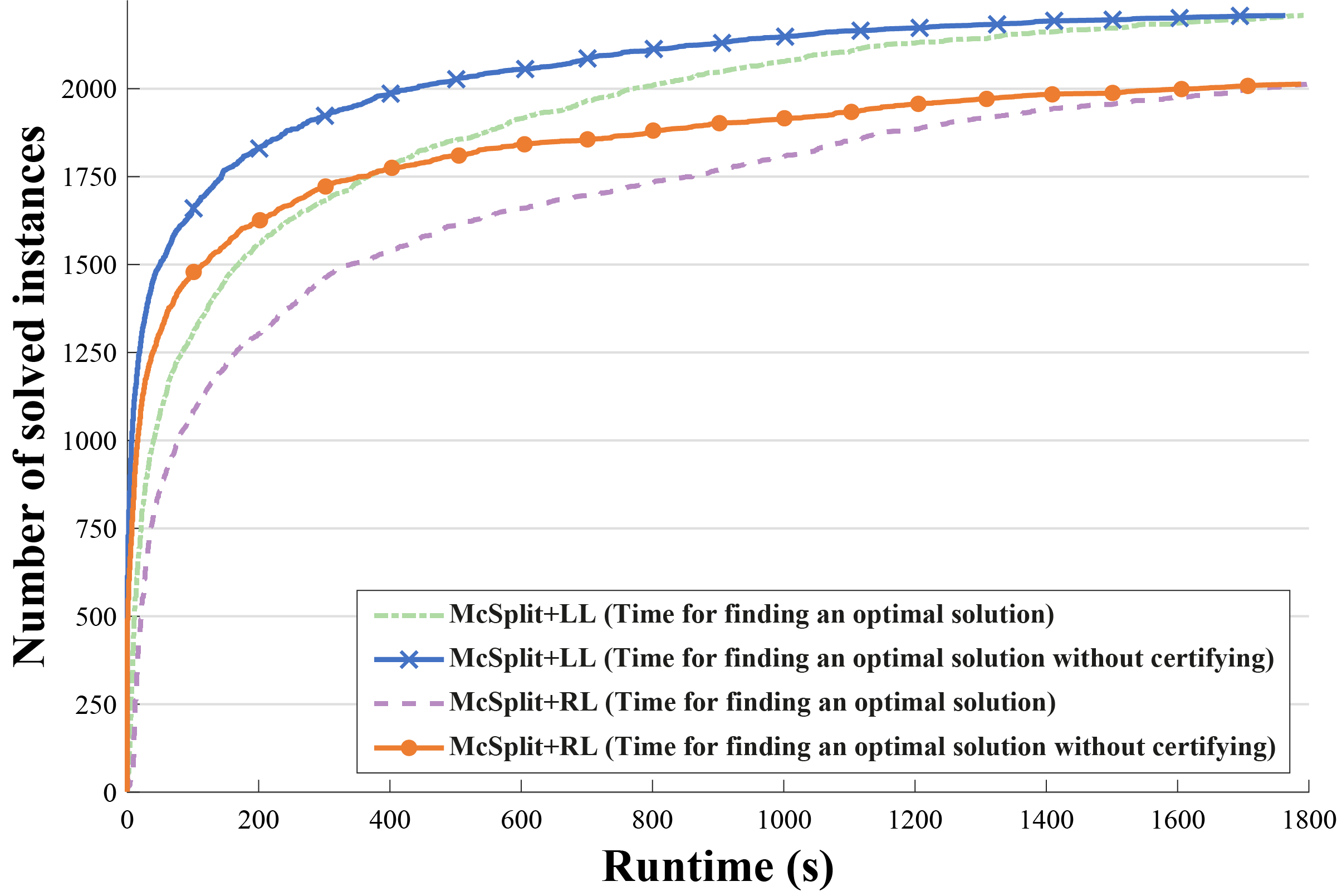}
\caption{ 
Cactus plots of McSplit+RL and McSplit+LL on the 2,309 moderate MCS instances, which solve 2,013 and 2,208 out of the instances respectively. 
}
\vspace{-1em}
\label{cmp1}
\end{figure}

\setcounter{figure}{4}
\begin{figure}[t]
\centering
\includegraphics[width=0.98\columnwidth]{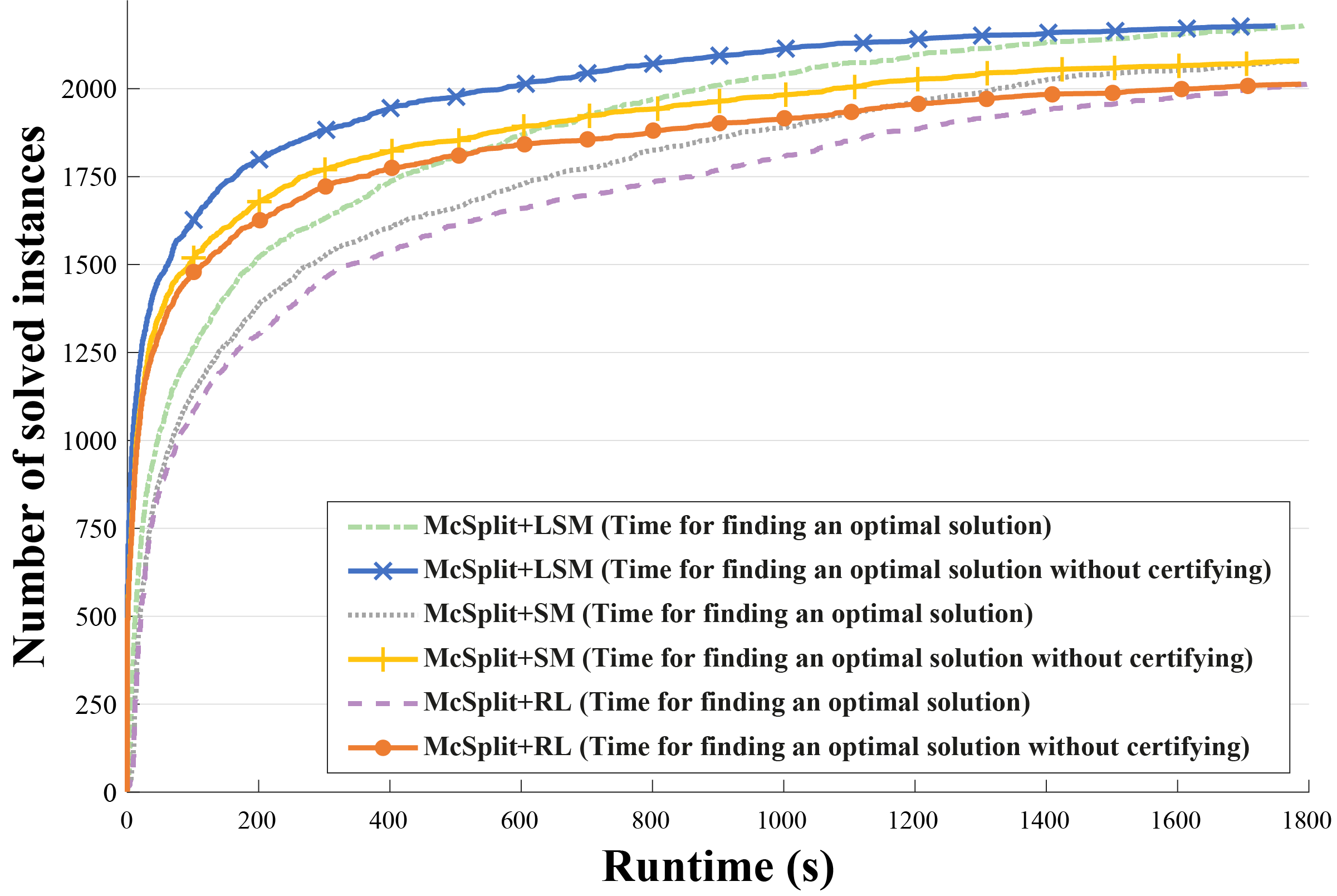}
\caption{ 
Cactus plots of McSplit+RL, McSplit+SM and McSplit+LSM on the 2,309 moderate  MCS instances, which solve 2,013, 2,079 and 2,179 out of the instances respectively.
}
\vspace{-1em}
\label{cmp3}
\end{figure}

\setcounter{figure}{3}
\begin{figure}[t]
\centering
\includegraphics[width=0.98\columnwidth]{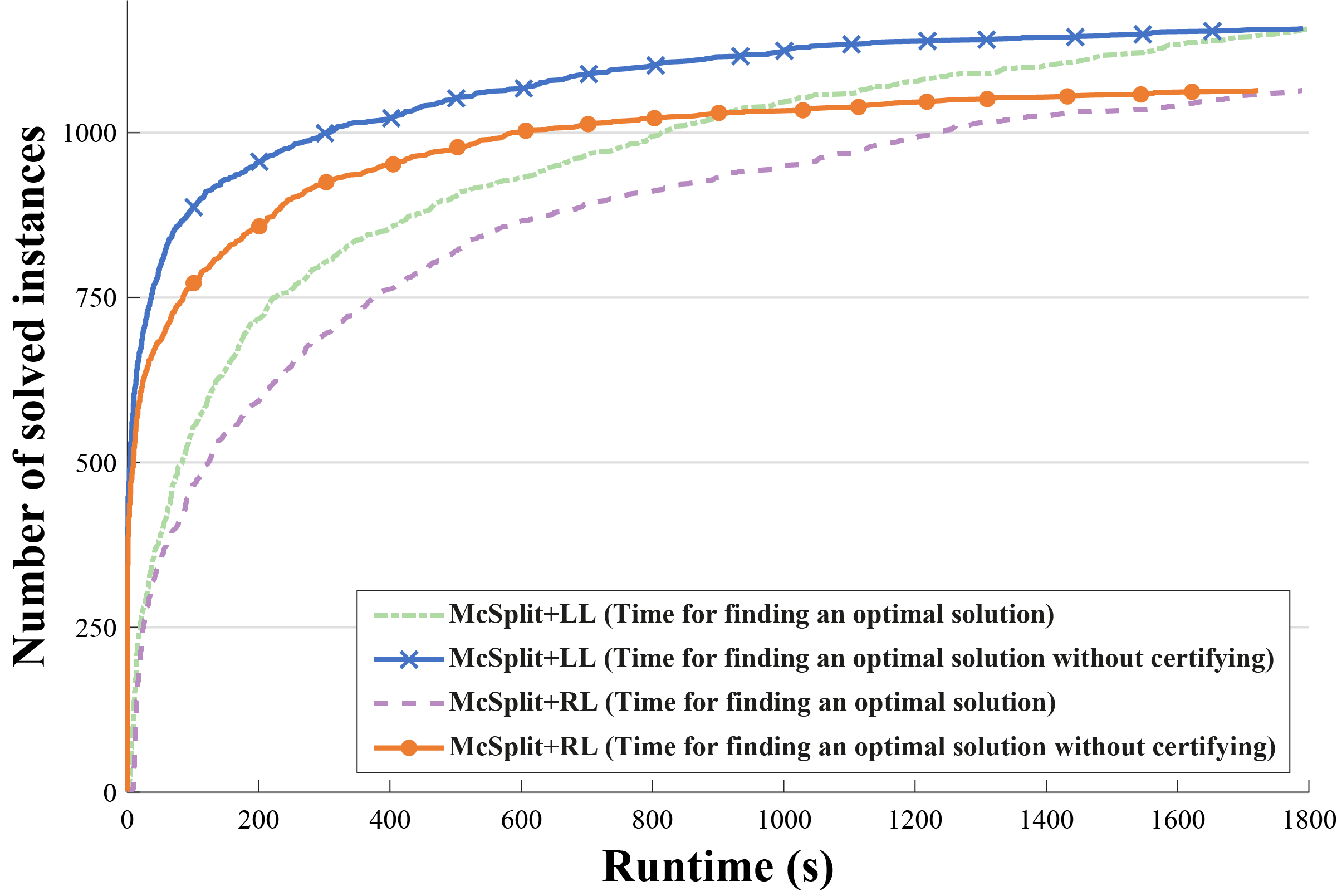}
\caption{ 
Cactus plots of McSplit+RL and McSplit+LL on the 1,166 moderate MCCS instances, which solve 1,064 and 1,158 out of the instances respectively.
}
\vspace{-1em}
\label{cmp2}
\end{figure}

\setcounter{figure}{5}
\begin{figure}[t]
\centering
\includegraphics[width=0.98\columnwidth]{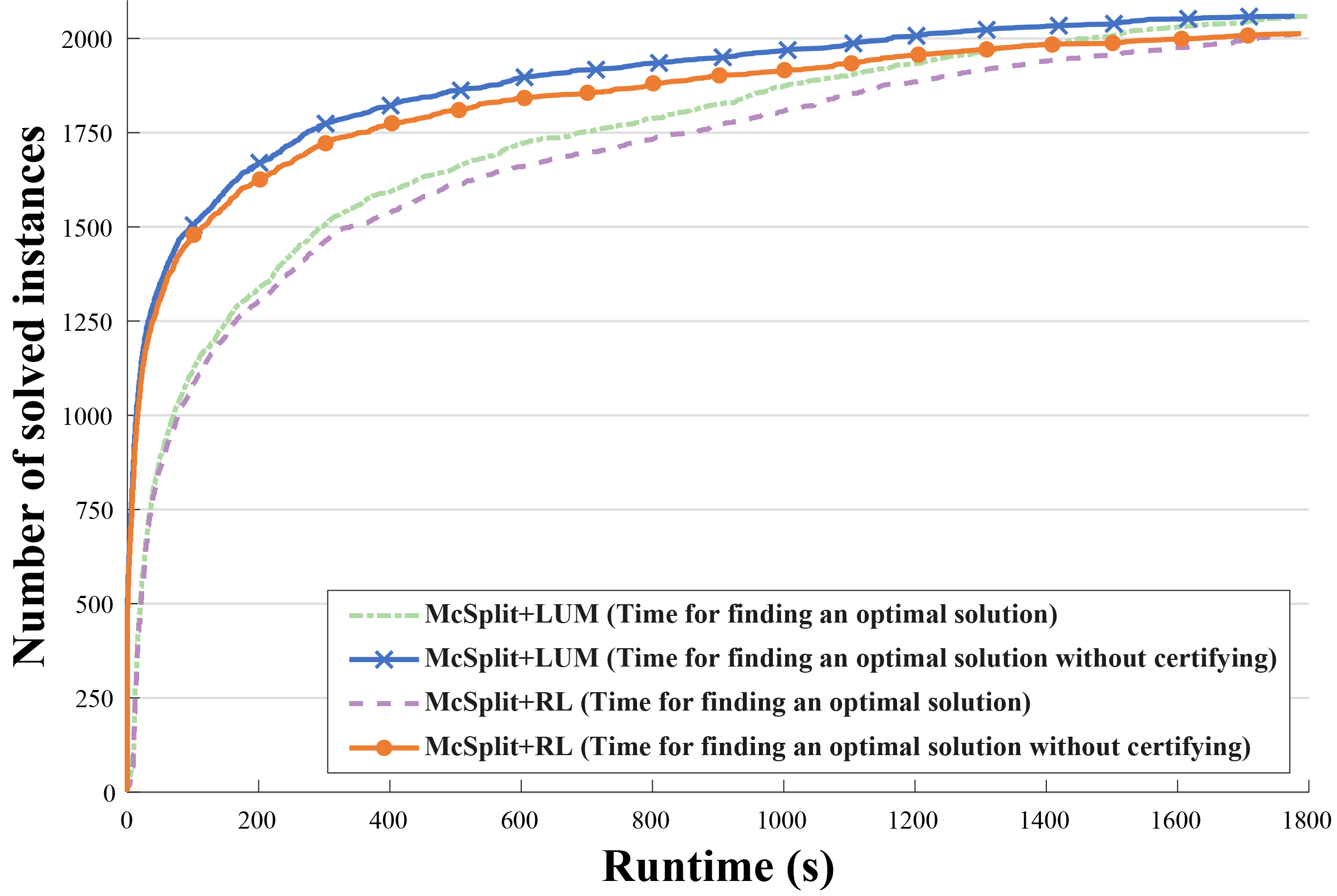}
\caption{ 
Cactus plots of McSplit+RL and McSplit+LUM on the 2,309 moderate MCS instances, which solve 2,013 and 2,059 out of the instances respectively.
}
\vspace{-1em}
\label{cmp4}
\end{figure}

\subsection{Benchmark Datasets}
The benchmark datasets\footnote{Available at http://liris.cnrs.fr/csolnon/SIP.html} include 25,552 instances that are divided into several compositions: % as follows:
\begin{itemize}
\item Biochemical reaction~\cite{gay2014subgraph} includes 136 directed bipartite graphs (with vertices between 9 and 386), and describe biochemical reaction networks originated from the biomodels.net. 
It provides 9,180 instances obtained by pairing each of the graphs. 
\item Images-PR15~\cite{solnon2015complexity} includes a target graph (with 4,838 vertices) and 24 pattern graphs (with vertices between 4 and 170), which are generated from segmented images. It provides 24 instances. 
\item Images-CVIU11~\cite{damiand2011polynomial} includes 43 pattern graphs (with vertices between 22 and 151), and 146 target graphs (with vertices between 1,072 and 5,972), which are generated from segmented images. It provides 6,278 instances. 
\item Meshes-CVIU11~\cite{damiand2011polynomial} includes 6 pattern graphs (with vertices between 40 and 199), and 503 target graphs (with vertices between 208 and 5,873), which are generated from meshes modelling 3D objects. It provides 3,018 instances. 
\item Scalefree~\cite{zampelli2010solving,solnon2010alldifferent} includes 100 instances. Each instance is composed of a target graph (with vertices between 200 and 1,000) and a pattern graph (contains 90\% of the vertices of the target graph). These instances are randomly generated from scale-free networks.
\item Si~\cite{zampelli2010solving,solnon2010alldifferent} includes 1,170 instances. Each instance is composed of a target graph (with vertices between 200 and 1,296) and a pattern graph (with vertices between 20\% and 60\% of the target graph). Among these instances, there are bounded valence graphs and modified bounded valence graphs, 4D meshes, and randomly generated graphs.
\item LV~\cite{MccreeshPT17} includes 49 pattern graphs and 48 target graphs (with vertices both between 10 and 128). It provides 2,352 instances.  
\item LargerLV~\cite{MccreeshPT17} includes 49 pattern graphs (with vertices between 10 and 128) and 70 target graphs (with vertices between 138 and 6,671). It provides 3,430 instances.
\end{itemize}

\subsection{Comparisons on MCS}
The McSplit algorithm is efficient enough and there are lots of small scale instances in the benchmark datasets that can be solved in several seconds. 
The results on these small scale instances cannot really show the gap of differences on the algorithm efficiency.  
%provide the quality reference on the algorithm efficiency.
%, but we also give the average time of solving the small scale instances. 
Hence, we group the 7,226 small scale instances which could be solved by all the tested algorithms within 10 seconds into the easy set, and only provide the average solving time.  
We also exclude another set of tough instances that cannot be solved by any algorithm within the time limit. 
Thereafter, we have 2,309 remaining instances that can be solved by at least one of the tested MCS algorithms within the time limit. 
We denote these medium hard instances as the moderate instances, and will use them for detailed performance comparison.

We compare McSplit+RL with \name
%and the results are 
as illustrated in Figure~\ref{cmp1}. 
Each point ($t$, $n$) in a curve of Figure~\ref{cmp1} indicates the algorithm solves (finds the optimal solution, with or without certifying) $n$ instances in $t$ seconds, the same in Figures~\ref{cmp2}, Figure~\ref{cmp3}, and Figure~\ref{cmp4}.
%and the same as meaning in the rest of comparison figures.
McSplit+RL solves 2,013 moderate instances while \name could solve 2,208 moderate instances. 
Besides, the average runtimes of McSplit+RL and \name on %the small scale instances 
the easy set of instances are 0.83s and 0.51s, respectively. 
In other words, \name solves 9.69\% more moderate instances than McSplit+RL, and \name is also faster than McSplit+RL in solving the easy instances. 
The results demonstrate that \name outperforms McSplit+RL significantly on the MCS problem.

\subsection{Comparisons on MCCS}
We also apply our \name algorithm to solve the MCCS problem. As the basic version of McSplit does not support solving MCCS on the directed graph, we exclude the directed graph instances (9,180 Biochemical reaction instances) from the datasets. 

%As same as the comparison on MCS,  
Using the same datasets processing as in MCS, %we first exclude 2,110 easy and tough instances
%\HK{we first exclude the ??? easy instances and the ??? tough instances},
we first exclude the 2,110 easy instances and the tough instances, 
and use the remaining 1,166 moderate instances as benchmarks for detailed comparison. 
%then we compare McSplit+LL with McSplit+RL on the remainder 1,166 instances and 
The results are illustrated in Figure~\ref{cmp2}. McSplit+RL solves 1,064 moderate instances, while \name solves 1,158 moderate instances (8.83\% more than McSplit+RL).
Also, the average runtimes of McSplit+RL and \name on %the small scale instances 
 the easy instances are 2.03s and 1.36s, respectively. 
The results demonstrate that McSplit+LL also outperforms McSplit+RL significantly on the MCCS problem.

\subsection{Ablation Study}
In this subsection, we do ablation studies to analyze the effectiveness of the two proposed operators, LSM and LUM. % on the MCS instances. 

We first compare three algorithms, McSplit+RL, McSplit+SM and McSplit+LSM, on the MCS instances and the results are illustrated in Figure~\ref{cmp3}. McSplit+SM is a variant of McSplit+RL that applies our decaying operation on the score lists $S_0(\cdot)$ and $S_1(\cdot)$ with the short-term threshold value $T_s$, called Short Memory (SM). McSplit+LSM is a variant that applies LSM on the top of McSplit.

McSplit+SM solves 2,079 moderate instances and McSplit+LSM solves 2,179 moderate instances which are 3.28\% and 8.25\% more than McSplit+RL. Besides, the average runtimes of McSplit+SM and McSplit+LSM on the easy instances are 0.82s and 0.53s, respectively. The results show that McSplit+RL can be improved by applying our proposed decaying mechanism,
and using LSM instead of SM yields better performance.

Then, we analyze the effectiveness of LUM. We implement it on top of McSplit+RL and obtain a variant algorithm called McSplit+LUM. We compare McSplit+LUM with McSplit+RL on the MCS instances and the results are illustrated in Figure~\ref{cmp4}. It solves 2,059 moderate instances which is 2.29\% better than McSplit+RL. 
Besides, the average runtime of McSplit+LUM on the easy set of instances was 0.81s. 

The ablation studies for both McSplit+LSM and McSplit+LUM demonstrate that our proposed operators could improve the performance and efficiency of the BnB algorithm for both MCS and MCCS problems. 

%\subsection{Further Analysis}

\section{Conclusion}

In this work, we address the Maximum Common induced Subgraph (MCS) and Maximum Common Connected induced Subgraph (MCCS) problems, and we propose an effective Branch-and-Bound (BnB) algorithm called McSplit+LL for these two problems. 
McSplit+LL incorporates two proposed operators into the effective BnB algorithm McSplit. The first one is a new branching operator on the BnB algorithm called Long-Short Memory (LSM). LSM finds the different properties of scoring each vertex and each vertex pair, and applies a reasonable policy to make the score of each vertex focus on the short-term reward and the score of each vertex pair focus on the long-term reward. The second one is a general operator called Leaf vertex Union Match (LUM) for MCS and MCCS. LUM allows the common subgraph match multiple leaf pairs whenever a vertex pair is matched, so as to speed up the overall matching at the same time not affect the solution optimality. Both LSM and LUM can improve the efficiency of the BnB algorithm. Besides, LUM is a general vertex-vertex mapping method and could be applied to other graph matching problems.

%LSM finds the different properties of each vertex and vertex pair on the branching node evaluation and applies a reasonable policy to make the each vertex score focus on the short-term reward and the vertex pair score focus on the long-term reward. 

%Secondly, we propose a Leaf vertex Union Match (LUM) operator. LUM allows subgraph match multiple leaf pairs for each matched vertex pair, so as to speed up the overall matching at the same time not affect the solution optimality. LUM is a general vertex-vertex mapping method and so it can be applied to other graph matching problems. 
%We incorporate these two operators LSM and LUM into the effective BnB algorithm McSplit. The resulting algorithm is called McSplit+LL. 
We do extensive experiments on %25,552
public instances to evaluate the performance of our proposed algorithm McSplit+LL, and the effectiveness of the two proposed operators LSM and LUM. The results show that McSplit+LL significantly outperforms the best-performing algorithm McSplit+RL for both MCS and MCCS, and both LSM and LUM can improve the efficiency and performance of the BnB algorithm.
%can improve the performance of BnB algorithm for MCS significantly. 

%% The file named.bst is a bibliography style file for BibTeX 0.99c
\clearpage
\bibliographystyle{named}
\bibliography{ijcai22}

\end{document}